% Template for ICASSP-2012 paper; to be used with:
%          spconf.sty  - ICASSP/ICIP LaTeX style file, and
%          IEEEbib.bst - IEEE bibliography style file.
% --------------------------------------------------------------------------
\documentclass{article}
\usepackage{spconf,amsmath,graphicx,bm,amssymb,dsfont}
\usepackage{algorithm}
\usepackage{algorithmic}
\usepackage{cases}
\usepackage{nccmath}
\usepackage{cite}
\usepackage{subcaption}
%\usepackage{spconf}
%\usepackage[dvips]{graphicx}
%\usepackage{bm}
%\usepackage[linesnumbered,boxed]{algorithm2e}
%\ninept

% \newtheorem{definition}{Definition~}[section]
% \newtheorem{lemma}{Lemma~}[section]
% \newtheorem{proposition}{Proposition~}[section]
% \newtheorem{hypothesis}{Hypothesis~}[section]
% \newtheorem{corollary}{Corollary~}[section]

% theorem and proof environment 
\newtheorem{theorem}{Theorem~}%[section]
\newtheorem{lemma}{Lemma~}
\newenvironment{proof}{\qquad \emph{Proof:}~}{\hfill{$\square$}}

\newcommand{\xiaowuhao}{\fontsize{8.5pt}{\baselineskip}\selectfont}

% ==================================== shortcut for simple math symbols ====================================
\def\d{\ldots}
\def\cK{\mathcal{K}}
\def\cM{\mathcal{M}}

\def\cCN{\mathcal{CN}}

\def\st{\mathrm{s.t.}}

\def\rk{\mathrm{rank}}

\def\et{\eta_m}

% bold for matrices and vectors: \v for vector and \m for matrix
\renewcommand{\v}[1]{\boldsymbol{#1}}
\newcommand{\m}[1]{\mathbf{#1}}

% hermitian and transpose
\def\hermitian{\dagger}
\def\transpose{\mathsf{T}}

% algorithm and figure reference (analogy for \eqref)
\newcommand{\algref}[1]{Algorithm \ref{#1}}

% try to align equations
\makeatletter
\newcommand{\pushright}[1]{\ifmeasuring@#1\else\omit\hfill$\displaystyle#1$\fi\ignorespaces}
\newcommand{\pushleft}[1]{\ifmeasuring@#1\else\omit$\displaystyle#1$\hfill\fi\ignorespaces}
\makeatother

% ======================================== content related commands ========================================

% ============ shortcut for expressions in problems

% VARIABLES
% Dual variables ====> \bk:beta k, \Lm:Lambda m, \Lmbk:Lambda_m(beta_k), \bkLm:beta_k(Lambda_m). 
\newcommand{\bk}[1][ ]{\left\{\beta_k^{#1}\right\}}
\newcommand{\Lm}[1][ ]{\left\{\m{\Lambda}_m^{#1}\right\}}
\newcommand{\Lmbk}[1][ ]{\left\{\m{\Lambda}_m\left(\left\{\beta_k^{#1}\right\}\right)\right\}}

% Primal variables ====> as the name indicates. 
\newcommand{\Vk}[1][ ]{\{\m{V}_k^{#1}\}}
\newcommand{\pk}[1][ ]{\{p_k^{#1}\}}

% EQUATION SETS
% \eqstd:equation set dual, \eqstp:equation set primal, \eqsta:equation set all. 
\newcommand\eqstd{\eqref{equ:dual_const2}--\eqref{equ:dual_var1}}
\newcommand\eqstp{\eqref{equ:slack_1}--\eqref{equ:pri_var2}}
\newcommand\eqsta{\eqref{equ:dual_const2}--\eqref{equ:pri_var2}}

% CONSTRAINTS
\newcommand{\ak}{\m{V}_k \bullet \m{H}_k - \gamma_k \left(\sum_{j\neq k} \m{V}_j \bullet \m{H}_k + \m{Q}\bullet \m{H}_k + \sigma^2\right)}
\newcommand{\Bm}{\et \left[ \begin{matrix}
	\m{0}& \m{0} \\
	\m{0}& \m{Q}^{(m:M, m:M)}
\end{matrix} \right] - \m{E}_m^\hermitian \left(\sum_{k=1}^K \m{V}_k+ \m{Q}\right) \m{E}_m}

% Other EXPRESSIONS
\newcommand{\G}[1][ ]{\m{I} + \sum_{m=1}^M \m{E}_m^\hermitian\m{\Lambda}_m^{#1} \m{E}_m + \sum_{j\neq k} \beta_j^{#1} \gamma_j \m{H}_j}
\newcommand{\Fm}[3][j]{\sum_{#1=#2}^M \eta_{#1} \left[ \begin{matrix}
	\m{0}& \m{0} \\
	\m{0}& \m{\Lambda}_{#1}^{(#1:M,#1:M)}
\end{matrix} \right] - \sum_{#1=#3}^M \m{E}_{#1}^\hermitian \m{\Lambda}_{#1} \m{E}_{#1}}

% MISC
% algorithm name 
\def\ouralg{the proposed algorithm }

% for all k
\def\fk{~\forall ~k \in \cK}
% for all m
\def\fm{~\forall ~m \in \cM}
% Title.
% ------
\title{Efficiently and Globally Solving Joint Beamforming and Compression Problem in the Cooperative Cellular Network via Lagrangian Duality}
%
% Single address.
% ---------------

\name{Xilai Fan$^{\star,\S}$, Ya-Feng Liu$^{\star}$, Liang Liu$^{\dag}$}

\address{$^{\star}$LSEC, ICMSEC, AMSS, Chinese Academy of Sciences, Beijing, China \\[2pt]
    $^{\S}$School of Mathematical Sciences, University of Chinese Academy of Sciences, Beijing, China\\[2pt]
	$^{\dag}$EIE Department, The Hong Kong Polytechnic University\\[2pt]
%    $^{\S}$School of Economics and Management, North China Electric Power University, Beijing, China\\[2pt]
  Email: fanxilai21@mails.ucas.ac.cn, yafliu@lsec.cc.ac.cn, liang-eie.liu@polyu.edu.hk}
%
% For example:
% ------------
%\address{School\\
%   Department\\
%   Address}
%
% Two addresses (uncomment and modify for two-address case).
% ----------------------------------------------------------
%\twoauthors
%  {A. Author-one, B. Author-two\sthanks{Thanks to XYZ agency for funding.}}
%   {School A-B\\
%   Department A-B\\
%   Address A-B}
%  {C. Author-three, D. Author-four\sthanks{The fourth author performed the work
%   while at ...}}
%   {School C-D\\
%   Department C-D\\
%   Address C-D}
%
\begin{document}
\ninept
\maketitle
%
%\wuhao {
\begin{abstract}
	
Consider the joint beamforming and quantization problem in the cooperative cellular network, where multiple relay-like base stations (BSs) connected to the central processor (CP) via rate-limited fronthaul links cooperatively serve the users. This problem can be formulated as the minimization of the total transmit power, subject to all users' signal-to-interference-plus-noise-ratio (SINR) constraints and all relay-like BSs' fronthaul rate constraints.  
%
%In this paper, we consider the joint beamforming and quantization problem in the cooperative cellular network, 
%where multiple relay-like base stations (BSs) cooperatively serve the users under the coordination of the central processor (CP). 
%In such network, the users and the relay-like BSs are connected by noisy wireless channels and 
%the relay-like BSs and the CP are connected by noiseless but rate-limited fronthaul links. 
In this paper, we first show that there is no duality gap between the considered problem and its Lagrangian dual 
by showing the tightness of the semidefinite relaxation (SDR) of the considered problem. 
Then we propose an efficient algorithm based on Lagrangian duality for solving the considered problem. 
The proposed algorithm judiciously exploits the special structure of the Karush-Kuhn-Tucker (KKT) conditions of the considered problem and finds the solution that satisfies the KKT conditions via two fixed-point iterations. 
%
% solution of the considered problem finds the solution that exploits the special structure of the solution of the considered problem and 
%essentially solves 
The proposed algorithm is highly efficient (as evaluating the functions in both fixed-point iterations are computationally cheap) 
and is guaranteed to find the global solution of the problem. 
Simulation results show the efficiency and the correctness of the proposed algorithm.    

%, in which a large number of devices sporadically transmit small data to a base-station (base station) by sending preassigned non-orthogonal signature sequences

%A convergence property is established.

%achieves the same detection performance as the state-of-the-art algorithms while having
%, whose computational complexity is much less than solving the detection problem for all sequences.

%achieves similar performance as some existing algorithms while having better computational scalability.
%
%The problem is formulated as a quadratic programming problem. Various semidefinite relaxation (SDR) based sub-optimal
%algorithms have been proposed to solve the problem. In this paper, we propose an enhanced SDR to general symbol constellation cases,
%and give a comprehensive comparison between the proposed SDR method and previous ones. Some theoretical results are also provided
%to show when these different SDRs are tight. Numerical simulations show the enhanced
%SDR can provide a comparable lower bound with some stronger SDRs, but can be solved much more efficiently.
%
%[[[Will rewrite the abstract!!!]]]
\end{abstract}
\begin{keywords}Cooperative cellular network, fixed-point iteration, KKT condition, Lagrangian duality, %uplink-downlink duality,
tightness of SDR
\end{keywords}
\section{Introduction}
Lagrangian duality \cite{bertsekas_nonlinear_1999,Boyd04}, 
a principle that (convex) optimization problems can be viewed from either primal or dual perspective, 
is a powerful and vital tool in revealing the special structures of the optimization problems
arising from engineering and further better solving the problems. 
%In practical engineering design, one is often interested in not only the numerical solution to the corresponding problems but also the special structure/property of their optimal solution. When a problem is formulated as a convex optimization problem, exploring its Lagrangian dual often reveals such structure/property. The knowledge of these structures/properties of the solution in turn lead to better algorithms for solving the corresponding problem. 
Celebrated uplink-downlink duality \cite{visotsky_optimum_1999,rashid-farrokhi_joint_1998} in wireless communications 
is an engineering interpretation of Lagrangian duality. 
%Uplink-downlink duality in the power control and beamforming design literature refers to the fact that 
%the minimum sum power required to achieve a set of SINR targets in the downlink channel 
%is equal to the minimum sum power required to achieve the same set of SINR targets in a virtual dual uplink channel 
%and the optimum transmit beamforming vectors for the downlink channel 
%which minimize the transmit power under individual SINR constraints are the same (up to a scale factor) as 
%the optimum receive beamforming vectors for the uplink channel. 
Usually, the uplink problems, e.g., the transmit power minimization problems subject to quality-of-service (QoS) constraints, can be solved efficiently and globally via the fixed-point iteration algorithm. The uplink-downlink duality result thus enables efficient algorithms for solving the downlink problems 
via solving the relatively easy uplink problems. In the literature, Lagrangian duality and uplink-downlink duality results have been proved in various ways and applied to solve different downlink problems; see \cite{visotsky_optimum_1999,rashid-farrokhi_joint_1998,
	duality11,schubert_iterative_2005,rashid-farrokhi_transmit_1998,
	bengtsson_optimal_1999,bengtsson_optimum_2002,wiesel_linear_2006,
	dahrouj_coordinated_2010,duality10,duality7,duality9,duality3, duality4,Tsung-Hui18,Yafeng13,Liu15,Liang15} and the references therein. 
%This line of algorithms based on uplink-downlink duality enjoys two key features: 
%one is high computational efficiency as the algorithm often only involves very cheap fixed-point iteration 
%and the other is global optimality. 
%Indeed, Lagrangian duality and uplink-downlink duality based algorithms have been widely designed %proposed for solving power control and/or beamforming design problems 
%for various communication networks; see \cite{visotsky_optimum_1999,rashid-farrokhi_joint_1998,
%duality11,schubert_iterative_2005,rashid-farrokhi_transmit_1998,
%bengtsson_optimal_1999,bengtsson_optimum_2002,wiesel_linear_2006,
%dahrouj_coordinated_2010,duality10,duality7,duality9,duality3, duality4,Tsung-Hui18,Yafeng13,Liu15,Liang15} and references therein. 

Different from the above works, this paper considers the joint beamforming and quantization problem in the cooperative cellular network, %is a promising architecture to satisfy the ever-increasing data requirement, as 
where multiple relay-like base stations (BSs) are connected to the central processor (CP) via rate-limited fronthaul links to cooperatively serve the users for 
%under the coordination of the central processor (CP) 
%and thus is able to 
effectively mitigating multiuser intercell interference. 
%Particularly in such network, the users and the relay-like BSs are connected by noisy wireless channels 
%and the relay-like BSs and the CP are connected by noiseless fronthaul links of finite rate limit. 
Such network includes coordinated multipoint \cite{Irmer11}, 
%distributed antenna system (DAS) \cite{Kerpez96}, 
cloud radio access network \cite{Simeone16}, and cell-free massive multi-input multi-output \cite{Ngo17} as special cases. 
%  (CoMP) (C-RAN)  (MIMO)
%However, very few works exploit Lagrangian duality or uplink-downlink duality in the cooperative cellular network 
%(possibly due to the reason that the optimization problem therein seems to be nonconvex). 
%The works along this direction are \cite{Liang16,liu_uplink-downlink_2020}. 
%More specifically, \cite{liu_uplink-downlink_2020} has established uplink-downlink duality 
%(with fixed beamforming vectors) in the cooperative cellular network. 
%Furthermore, \cite{liu_uplink-downlink_2020} has designed an algorithm for solving the joint beamforming and quantization problem 
%based on the established duality result. 
%The algorithm in \cite{liu_uplink-downlink_2020} first solves the uplink problem 
%via fixed-point iteration and then solves the downlink problem with fixed beamformers 
%(which is a convex problem) obtained by solving the uplink problem by calling a solver. %Notice that the downlink problem with fixed beamformers is a convex optimization problem.
Recently, Refs. \cite{Liang16,liu_uplink-downlink_2020} have established an interesting uplink-downlink duality for such network when relay-like BS compression optimization is considered. Specifically, given the same beamforming vectors in the uplink and downlink, it has been shown there that when Wyner-Ziv compression and multivariate compression are adopted in the uplink and the downlink, respectively, the transmit power minimization problem in the uplink subject to individual signal-to-interference-plus-noise-ratio (SINR) constraints and fronthaul capacity constraints is equivalent to that in the downlink. 
Furthermore, \cite{liu_uplink-downlink_2020} has designed an algorithm for solving the joint beamforming and quantization problem 
based on the established duality result. 
The algorithm in \cite{liu_uplink-downlink_2020} first solves the uplink problem 
via fixed-point iteration and then solves the downlink problem with fixed beamformers 
(which is a convex problem) obtained by solving the uplink problem by calling a solver. %Notice that the downlink problem with fixed beamformers is a convex optimization problem. 

%
%
%
%Moreover, given any receive beamforming vectors, a fixed-point method was proposed to optimally solve the uplink transmit power minimization problem. Then, the downlink transmit power minimization problem can be solved via its dual problem in the uplink. 
%	

%They extend the traditional uplink-downlink duality results to C-RAN, 
%showing that at the optimum, the original downlink beamforming problem and the constructed corresponding uplink beamforming problem must obtain the same optimal value, 
%and every beamformer direction of downlink and uplink must be identical. 
%Based on this fact, they have designed an algorithm that solves the downlink beamforming problem to the global optimality. 
%In the first stage, the algorithm solves the corresponding uplink beamforming problem to get the optimal downlink beamformer directions;
%in the second stage, since the optimal directions for downlink beamformers are already solved, the algorithm calls an external solver to solve the dimension-reduced downlink beamforming problem. 
%In all, the algorithm takes one fixed point iteration and one call to external solver to solve the problem. 
%Though the algorithm can solve the problem to the global optimality, 
%it is still not known 
%whether the downlink beamforming problem in C-RAN has got other properties that can be exploited.

In this paper, we consider the same joint beamforming and quantization problem as in \cite{liu_uplink-downlink_2020} 
but make further progress in the duality result and the algorithm. 
The main contributions of this paper are as follows. (1) \emph{New Duality Result.} We establish the tightness of the semidefinite relaxation (SDR) of the considered problem 
	and thus the equivalence of the two problems. 
	This result further implies that the dual problems of the considered problem and its SDR are the same.
%	the optimal value of the dual problems of the considered problem and its SDR is the same. 
%	 as the 
%	This result %reveals the hidden convexity of the considered problem and 
%	enables to study the Lagrangian dual of the SDR of the considered problem in order to solve the problem itslef. 
	Note that the Lagrangian dual of the original problem is studied in this paper, 
	which differs from the Lagrangian dual of the problem with fixed beamformers in \cite{liu_uplink-downlink_2020}. %(as the convexity of the original problem is unknown).
%	guarantees zero duality gap when we write the Lagrangian dual of the considered problem and its SDR. This
%	\item 
(2) \emph{Efficient Fixed-Point Iteration Algorithm.} Based on the established duality result, 
	we propose an efficient algorithm for solving the considered problem. 
	The proposed algorithm first solves the dual problem via fixed-point iteration 
	and then solves the primal problem via another fixed-point iteration. 
	The proposed algorithm is highly efficient (as each update of variables in fixed-point iterations is computationally cheap) 
	and is guaranteed to find the global solution of the problem. 
	The proposed algorithm exploits more special structures of the solution of the considered problem 
	than the algorithm in \cite{liu_uplink-downlink_2020} 
	and thus significantly outperforms it in terms of the computational efficiency. 
\emph{Notations.} For any matrix $\m{A}$, $\m{A}^\hermitian$ and $\m{A}^\transpose$ denote the conjugate transpose and transpose of $\m{A}$, respectively; 
%$\rk(\m{A})$ denotes the rank of $\m{A}$; 
$\m{A}^{(m, n)}$ denotes the entry on the $m$-th row and the $n$-th column of $\m{A};$ and
$\m{A}^{(m_1:m_2, n_1:n_2)}$ denotes a submatrix of $\m{A}$ defined by 
\begin{equation*}
	\left[ \begin{matrix}
		\m{A}^{(m_1, n_1)}& \cdots & \m{A}^{(m_1, n_2)} \\
		\vdots& \ddots & \vdots\\
		\m{A}^{(m_2, n_1)}& \cdots & \m{A}^{(m_2, n_2)}
	\end{matrix} \right].
\end{equation*}
For two matrices $\m{A}_1$ and $\m{A}_2$ of appropriate sizes
% , $\m{A}_1\circ\m{A}_2$ denotes their component-wise multiplication; 
, $\m{A}_1 \bullet \m{A}_2$ denotes the trace of $\m{A}_1 \m{A}_2$. 
%their matrix inner product
%For a square matrix $\m{S}$, $\m{S}^{-1}$ denotes its inverse, $\tr(\m{S})$ denotes its trace. If the matrix is also Hermitian, 
%$\m{S} \succ \m{0}$ or $\m{S} \succeq \m{0}$ indicates that $\m{S}$ is positive definite or positive semidefinite, respectively. 
%For any two Hermitian matrixes $\m{S}_1, \m{S}_2$, $\m{S}_1 \bullet \m{S}_2$ is their matrix inner product, namely, $\tr(\m{S}_1 \m{S}_2)$. 
We use $\cCN(\m{0}, \m{Q})$ to denote the complex Gaussian distribution with zero mean and covariance $\m{Q}.$ ~
Finally, we use $\m{I}$ to denote the identity matrix of an appropriate size, 
$\m{0}$ to denote an all-zero matrix of an appropriate size, and $\m{E}_m$ to denote the square all-zero matrix except its $m$-th diagonal entry being one. 
		
\section{System Model and Problem Formulation}
\subsection{System Model}
Consider a cooperative cellular network consisting of one CP and $M$ single-antenna relay-like BSs (will be called relays for short later), 
which cooperatively serve $K$ single-antenna users. %in the interested region.
 In such network, all users and relays are connected by noisy wireless channels and all relays and the CP are connected by noiseless fronhaul links of finite rate. Let $\cM$ and $\cK$ denote the sets of the relays and the users, respectively. 

We first introduce the compression model from the CP to the relays. The transmitted signal at the CP is $\tilde{\v{x}} = \sum_{k=1}^K \v{v}_k s_k$, 
where $\v{v}_k = [v_{k,1}, \d , v_{k,M}]^\transpose$ is an $M \times 1$ beamforming vector and $s_k\sim \cCN(0, 1)$ is the information signal for user $k$. 
%Without loss of generality, we assume that $s_1, \d, s_K$ follow independently identically distributed complex Gaussian distribution with zero mean and unit variance. 
Because of the limited capacities of the fronthaul links, 
the signal from the CP to the relays need to be first compressed before transmitted. Let the compression error be $\v{e} = [e_1, \d, e_m]^\transpose \sim \cCN(\m{0}, \m{Q}),$ where $\m{Q}$ is the covariance matrix to be designed.  
Then the received signal at relay $m$ is $x_m = \sum_{k=1}^K v_{k,m}s_k + e_m$. 
The channel model from the relays to the users is 
$y_k = \sum_{m=1}^M h_{k,m} x_m + z_k,$
where $y_k$ is the signal received by user $k$, $x_m$ is the signal transmitted by relay $m$, 
$h_{k,m}$ is the channel coefficient from relay $m$ to user $k$, and 
$\left\{z_k\right\}$ are independent and identically distributed (i.i.d.) additive Gaussian noise distributed as $\cCN(0, \sigma^2).$
%with zero mean and variance $\sigma^2$. 
%The relay-like BS will be called relay for short later. 
%to the central processor via the noiseless digital fronthaul link. 
%The fronthaul link between relay $m$ and central processor is assumed to be of finite rate $C_m$ bits per symbol. 
% ============================= Note the user defined command: \d: \ldots, \fk: for all k, \fm: for all m
%We consider the joint beamforming and quantization problem under the above architecture as in \cite{liu_uplink-downlink_2020}. 
 
Under the above model, the received signal at user $k$ is 
\begin{equation}
	y_k = \v{h}_k^\hermitian \left( \sum_{i=1}^K \v{v}_i s_i \right) + \v{h}_k^\hermitian \v{e} + z_k,
%	\tag{Downlink Beamforming Model}
	\label{equ:DBM}
\end{equation}
where $\v{h}_k = [h_{k,1},\d,h_{k,M}]^\hermitian$ is the channel vector of user $k$. Then the total transmit power of all the relays is $
	\sum_{k=1}^K \|\v{v}_k\|^2 + \m{Q} \bullet \m{I};$
%	the signal-to-interference-plus-noise-ratio 
	the SINR of user $k$ is
\begin{equation}
	\frac{|\v{h}_k^\hermitian \v{v}_k|^2}{\sum_{j\neq k} |\v{h}_k^\hermitian \v{v}_j|^2 + \v{h}_k^\hermitian \m{Q} \v{h}_k + \sigma^2},~\forall~k\in\cK; 
\end{equation}
and the compression rate of relay $m$ under the multivariate compression strategy \cite{Simeone13} is 
\begin{equation}
	\begin{aligned}
		\log_2 \frac{\sum_{k=1}^K |v_{k,m}|^2 + \m{Q}^{(m,m)}}{ \m{Q}^{(m:M, m:M)}/\m{Q}^{(m+1:M, m+1:M)} },~\forall~m\in\cM.\\
	\end{aligned}
\end{equation}
In the above, $\m{Q}^{(m:M, m:M)}/\m{Q}^{(m+1:M, m+1:M)}$ is the Schur complement of the block $\m{Q}^{(m+1:M, m+1:M)}$ of $\m{Q}^{(m:M, m:M)},$ which is equal to $\m{Q}^{(m,m)} - \m{Q}^{(m,m+1:M)} (\m{Q}^{(m+1:M, m+1:M)})^{-1} \m{Q}^{(m+1:M, m)}.$
%$
%	\medmath{
%		\m{K}_m = \m{Q}^{(m,m)} - \m{Q}^{(m,m+1:M)} (\m{Q}^{(m+1:M, m+1:M)})^{-1} \m{Q}^{(m+1:M, m)}. 
%	}$
%	\label{equ:def_Km}$
%\end{equation*}

\subsection{Problem Formulation}
Now we are ready to present the problem formulation. Given a set of SINR targets $\left\{ \gamma_k \right\}$ and a set of fronthaul capacities $\left\{ C_m \right\}$, 
the interested optimal joint beamforming and compression problem, which minimizes the total transmit power subject to all users' SINR constraints and
all relays' fronthaul capacity constraints, is as follows\cite{liu_uplink-downlink_2020}: 
\begin{equation}
	\xiaowuhao{
		\begin{aligned}
			\min_{\{\v{v}_k\}, \m{Q}} &\quad \sum_{k=1}^K \|\v{v}_k\|^2 + \m{Q} \bullet \m{I}\\
			\st~~ &\quad \frac{|\v{h}_k^\hermitian \v{v}_k|^2}{\sum_{j\neq k} |\v{h}_k^\hermitian \v{v}_j|^2 + \v{h}_k^\hermitian \m{Q} \v{h}_k + \sigma^2} \geq \gamma_k, \fk{},\\
			&\quad \log_2 \frac{\sum_{k=1}^K |v_{k,m}|^2 + \m{Q}^{(m,m)}}{\m{Q}^{(m:M, m:M)}/\m{Q}^{(m+1:M, m+1:M)}} \leq C_m, \fm{},\\
			&\quad \m{Q}\succeq \m{0}. 
		\end{aligned}
		\label{equ:original obp}
	}
\end{equation}
Let $\m{H}_k = \v{h}_k \v{h}_k^\hermitian$ for all $k$ and $\eta_m = 2^{C_m}$ for all $m.$ By \cite[Propostion 4]{liu_uplink-downlink_2020}, problem \eqref{equ:original obp} is equivalent to the following problem: 
\begin{equation}
	\medmath{
	\begin{aligned}
		\min_{\{\v{v}_k\}, \m{Q}}&\quad \sum_{k=1}^K \|\v{v}_k\|^2 + \m{Q} \bullet \m{I}\\
		\st \quad &\quad  \v{v}_k^\hermitian \m{H}_k \v{v}_k - \gamma_k \left(\sum_{j\neq k} \v{v}_j^\hermitian \m{H}_k \v{v}_j + \m{Q}\bullet \m{H}_k + \sigma^2\right) \geq 0, \fk{}, \\
		&\quad \eta_m \left[ \begin{matrix}
			\m{0}& \m{0} \\
			\m{0}& \m{Q}^{(m:M, m:M)}
		\end{matrix} \right] - \m{E}_m^\hermitian \left(\sum_{k=1}^K \v{v}_k \v{v}_k^\hermitian+ \m{Q}\right) \m{E}_m \succeq \m{0} , \\ & \pushright{\fm{}},\\
		&\quad \m{Q}\succeq \m{0}. 
	\end{aligned} \tag{P}
	}
\end{equation}
%where . 
% \begin{align}
% 	\min_{\{\v{v}_k\}, \m{Q}}&\quad \sum_{k=1}^K \|\v{v}_k\|^2 + \m{Q} \bullet \m{I}_M\\
% 	\mathrm{s.t.}&\quad  \v{v}_k^\hermitian \m{H}_k \v{v}_k - \gamma_k \left(\sum_{j\neq k} \v{v}_j^\hermitian \m{H}_k \v{v}_j + \m{Q}\bullet \m{H}_k + \sigma^2\right) \geq 0, \fk{}, \\
% 	&\quad \eta_m \left[ \begin{matrix}
% 		\m{0}& \m{0} \\
% 		\m{0}& \m{Q}^{(m:M, m:M)}
% 	\end{matrix} \right] - \m{E}_m^\hermitian \left(\sum_{k=1}^K \v{v}_k \v{v}_k^\hermitian+ \m{Q}\right) \m{E}_m \succeq \m{0} , \fm{},\\
% 	&\quad \m{Q}\succeq \m{0},
% \end{align}

In the following section, we shall focus on problem (P) and design an efficient algorithm for solving it. 

\section{Proposed Algorithm via Lagrangian Duality}
\subsection{Tightness of SDR of (P)}
Problem (P) is a separable homogeneous quadratically constrained quadratic program. 
A well-known technique to tackle such problem is the SDR \cite{luo_semidefinite_2010}. 
Applying the SDR technique to (P), 
we obtain 
\begin{equation}\label{SDR} 
	{
		\begin{aligned}
			\min_{\Vk, \m{Q}} &\quad \sum_{k=1}^K \m{V}_k \bullet \m{I} + \m{Q} \bullet \m{I}\\
			\st~~~ &\quad a_k(\left\{\m{V}_k\right\},\m{Q}) \geq 0,~{\fk{}}, \\
			&\quad \m{B}_m(\left\{\m{V}_k\right\},\m{Q}) \succeq \m{0},~{\fm{}},\\
			&\quad \m{V}_k \succeq \m{0} ,\fk{}, \\
			&\quad \m{Q} \succeq \m{0}, 
		\end{aligned} %\tag{SDR}
	}
\end{equation}
where
\begin{equation*}
\begin{aligned}
a_k(\left\{\m{V}_k\right\},\m{Q}) & = \ak, \\ 
\m{B}_m(\left\{\m{V}_k\right\},\m{Q}) & = \Bm.%, \\ 
%\m{C}_k &\triangleq \m{G}_k - \beta_k \m{H}_k, \fk{},\\
%\m{D} &\triangleq  \m{\Gamma} - \m{F}, 
\end{aligned}
\end{equation*} The Lagrangian dual of problem \eqref{SDR} is 
	\begin{equation}\label{dual}
		{
			\begin{aligned}
				\max_{\bk, \Lm} &\quad \sum_{k=1}^K (\gamma_k \sigma^2) \beta_k \\
				\st~~~~~~ &\quad \m{C}_k(\left\{\beta_k\right\},\left\{\m{\Lambda}_m\right\}) - \beta_k \m{H}_k \succeq \m{0},~{\fk{}},\\
				&\quad	 \m{D}(\left\{\beta_k\right\},\left\{\m{\Lambda}_m\right\}) \succeq \m{0},\\
				&\quad \beta_k \geq 0 ,\fk{},\\ &\quad \m{\Lambda}_m \succeq \m{0} ,\fm{},
			\end{aligned}
			%\tag{Dual-SDR}
		}
	\end{equation}
where $\beta_k$ is the dual variable associated with the $k$-th SINR constraint in \eqref{SDR},
$\m{\Lambda}_m$ is the dual variable associated with the $m$-th fronthaul capacity constraint in \eqref{SDR}, and
	\begin{equation*}
\m{C}_k(\left\{\beta_k\right\},\left\{\m{\Lambda}_m\right\}) = \G, 
\label{equ:def_G}
\end{equation*} %and
\begin{multline*}
\m{D}(\left\{\beta_k\right\},\left\{\m{\Lambda}_m\right\})=\m{I} - \sum_{m=1}^M \eta_m \left[ \begin{matrix}
\m{0}& \m{0} \\
\m{0}& \m{\Lambda}_m^{(m:M,m:M)}
\end{matrix} \right] \\+ \sum_{k=1}^K \beta_k \gamma_k \m{H}_k + \sum_{m=1}^M \m{E}_m^\hermitian \m{\Lambda}_m \m{E}_m.
\end{multline*}
%\begin{equation*}
%\begin{aligned}
%\m{\Gamma}&\triangleq \m{I} + \sum_{k=1}^K \beta_k \gamma_k \m{H}_k, \\
%\m{F}&\triangleq \Fm[m]{1}{1}. 
%\end{aligned}
%\end{equation*}
	
An important line of research on the SDR is to study its tightness \cite{luo_semidefinite_2010,lu2019tightness,liu2021cram}. 
In the following theorem, we show that the SDR in \eqref{SDR} is tight (if it is feasible), i.e., it always has a rank-one solution. 
This shows that problem (P) admits a convex reformulation and answers a question in \cite[Section IX-B]{liu_uplink-downlink_2020}. 

\begin{theorem}
	Suppose that problem \eqref{SDR} is feasible. Then it always has a rank-one solution. 
\end{theorem}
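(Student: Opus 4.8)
\emph{Proof proposal.} I would prove the theorem by an explicit, self‑contained rank‑one extraction rather than by going through duality: starting from an arbitrary optimal solution of \eqref{SDR}, I construct a rank‑one optimal solution directly.

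\emph{Step 1 (attainment and positivity).} First I would note that \eqref{SDR} attains its optimum: the objective $\sum_k \m{V}_k\bullet\m{I}+\m{Q}\bullet\m{I}$ is nonnegative, so on each of its sublevel sets the matrices $\m{V}_k,\m{Q}$ are positive semidefinite with bounded trace, hence confined to a compact set, and the feasible set is closed; thus an optimal solution $(\{\m{V}_k^\star\},\m{Q}^\star)$ exists. I would also record that feasibility, together with the standing assumptions $\gamma_k>0$ and $\sigma^2>0$, forces $\v{h}_k\neq\v{0}$ and, via $a_k(\{\m{V}_k^\star\},\m{Q}^\star)\geq 0$, that $\v{h}_k^\hermitian\m{V}_k^\star\v{h}_k\geq\gamma_k\sigma^2>0$ for every $k$ (the degenerate case $\gamma_k=0$, in which $\m{V}_k^\star=\m{0}$ may be chosen, is trivial).

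\emph{Step 2 (the construction).} For each $k$ set $\v{v}_k=\m{V}_k^\star\v{h}_k/\sqrt{\v{h}_k^\hermitian\m{V}_k^\star\v{h}_k}$ and $\m{V}_k^\sharp=\v{v}_k\v{v}_k^\hermitian$, and keep $\m{Q}^\star$ unchanged; each $\m{V}_k^\sharp$ has rank one. Cauchy--Schwarz for the positive semidefinite forms $\v{x}^\hermitian\m{V}_j^\star\v{y}$ yields the three estimates $\m{V}_k^\sharp\bullet\m{H}_k=\m{V}_k^\star\bullet\m{H}_k$, $\m{V}_j^\sharp\bullet\m{H}_k\leq\m{V}_j^\star\bullet\m{H}_k$ for $j\neq k$, and $(\m{V}_k^\sharp)^{(m,m)}\leq(\m{V}_k^\star)^{(m,m)}$ for all $m$. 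The first two give $a_k(\{\m{V}_k^\sharp\},\m{Q}^\star)\geq a_k(\{\m{V}_k^\star\},\m{Q}^\star)\geq 0$, so the SINR constraints hold. Using the structural fact that $\m{E}_m^\hermitian\m{A}\m{E}_m$ is the matrix whose only nonzero entry equals $\m{A}^{(m,m)}$ (in position $(m,m)$), the diagonal estimate gives $\m{B}_m(\{\m{V}_k^\sharp\},\m{Q}^\star)=\m{B}_m(\{\m{V}_k^\star\},\m{Q}^\star)+\m{E}_m^\hermitian\!\left(\textstyle\sum_k\m{V}_k^\star-\sum_k\m{V}_k^\sharp\right)\!\m{E}_m\succeq\m{B}_m(\{\m{V}_k^\star\},\m{Q}^\star)\succeq\m{0}$, so the fronthaul constraints hold; $\m{V}_k^\sharp\succeq\m{0}$ and $\m{Q}^\star\succeq\m{0}$ are clear. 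Summing the diagonal estimate over $m$ gives $\m{V}_k^\sharp\bullet\m{I}\leq\m{V}_k^\star\bullet\m{I}$, so the objective does not increase. Hence $(\{\m{V}_k^\sharp\},\m{Q}^\star)$ is feasible with objective at most the optimum, i.e.\ it is a rank‑one optimal solution.

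\emph{Main obstacle.} The step requiring care is that the fronthaul (Schur‑complement) constraints $\m{B}_m\succeq\m{0}$ survive the replacement $\m{V}_k^\star\mapsto\m{V}_k^\sharp$; this hinges on the fact that each $\m{V}_k$ enters $\m{B}_m$ only through the scalar $(m,m)$ entry picked out by $\m{E}_m^\hermitian(\cdot)\m{E}_m$, which is exactly matched by the coordinatewise bound $(\m{V}_k^\sharp)^{(m,m)}\leq(\m{V}_k^\star)^{(m,m)}$ produced by the particular choice $\v{v}_k\propto\m{V}_k^\star\v{h}_k$ (the SINR constraints are preserved for the analogous reason). An alternative route would be to establish strong duality between \eqref{SDR} and \eqref{dual} — the dual is strictly feasible, e.g.\ for sufficiently small $\beta_k$ and $\m{\Lambda}_m$ — and then read off from complementary slackness that $(\m{C}_k(\{\beta_k^\star\},\{\m{\Lambda}_m^\star\})-\beta_k^\star\m{H}_k)\,\m{V}_k^\star=\m{0}$; since $\m{C}_k\succeq\m{I}\succ\m{0}$ and $\m{H}_k$ has rank one, $\m{C}_k-\beta_k^\star\m{H}_k$ has a kernel of dimension at most one, forcing $\rk(\m{V}_k^\star)\leq 1$. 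I would favour the direct argument above, as it is elementary and exhibits the rank‑one solution explicitly.
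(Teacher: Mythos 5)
Your main argument is correct, but it takes a genuinely different route from the paper. The paper's proof is exactly your ``alternative route'': it invokes the KKT conditions of \eqref{SDR}, uses the complementary slackness condition $\m{V}_k\bullet(\m{C}_k(\{\beta_k\},\{\m{\Lambda}_m\})-\beta_k\m{H}_k)=0$ together with the observation that $\m{C}_k\succ\m{0}$ and $\rk(\m{H}_k)=1$ force $\rk(\m{C}_k-\beta_k\m{H}_k)\geq M-1$, and concludes $\rk(\m{V}_k)\leq 1$ by a rank inequality. Your preferred argument instead performs an explicit rank-one extraction $\v{v}_k\propto\m{V}_k^\star\v{h}_k$ and verifies via Cauchy--Schwarz that all constraints and the objective survive the replacement; the key extra observation beyond the classical SINR-only setting --- that the fronthaul constraints see $\m{V}_k$ only through the scalar $(\m{V}_k)^{(m,m)}$ picked out by $\m{E}_m^\hermitian(\cdot)\m{E}_m$, which the coordinatewise bound $(\m{V}_k^\sharp)^{(m,m)}\leq(\m{V}_k^\star)^{(m,m)}$ controls --- is exactly right, and all three Cauchy--Schwarz estimates check out. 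What each approach buys: yours is elementary, constructive (it exhibits the rank-one optimizer), and avoids the attainment/constraint-qualification issue that the paper glosses over when it asserts that feasibility alone yields a primal--dual KKT pair (your Step 1 and the remark on strict dual feasibility are the pieces needed to make that rigorous); the paper's duality-based proof is shorter and is not merely a proof device --- the same complementary-slackness rank argument is reused to derive the closed-form beamformer \eqref{equ:pri_beamforming_direction} and underpins the entire fixed-point algorithm, so it integrates better with the rest of the development.
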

\begin{proof}
	Since the SDR is feasible, there must exist a primal-dual pair
	$\Vk, \m{Q}, \bk, \Lm$ such that the Karush-Kuhn-Tucker (KKT) conditions of problem \eqref{SDR} hold. 
	In particular, the complimentary slackness condition $\m{V}_k \bullet (\m{C}_k(\left\{\beta_k\right\},\left\{\m{\Lambda}_m\right\}) - \beta_k \m{H}_k)= 0$ holds. 
%	\begin{equation}
%		\m{V}_k \bullet (\m{C}_k(\left\{\beta_k\right\},\left\{\m{\Lambda}_m\right\}) - \beta_k \m{H}_k)= 0, 
%		\label{equ:rank1}
%	\end{equation}
	Since $\m{C}_k(\left\{\beta_k\right\},\left\{\m{\Lambda}_m\right\})$ is positive definite and $\m{H}_k$ is rank-one and positive semidefinite, it follows that 
%	the rank of $\m{C}_k(\left\{\beta_k\right\},\left\{\m{\Lambda}_m\right\}) - \beta_k \m{H}_k$ is greater than or equal to $M-1,$
	$$\rk\left(\m{C}_k(\left\{\beta_k\right\},\left\{\m{\Lambda}_m\right\}) - \beta_k \m{H}_k \right) \geq M-1,$$ 
	which, together with the complimentary slackness condition and the rank inequality, implies that 
	$\rk(\m{V}_k)\leq 1$.
	%The proof is completed. 
\end{proof}

\subsection{Proposed Algorithm}
%To ease the presentation, we introduce the following notations: 
%\begin{equation*}
%	\begin{aligned}
%		a_k &\triangleq \ak, \\ & \pushright{\fk{}}, \\
%		\m{B}_m &\triangleq \Bm, \\ & \pushright{\fm{}}, \\
%		\m{C}_k &\triangleq \m{G}_k - \beta_k \m{H}_k, \fk{},\\
%		\m{D} &\triangleq  \m{\Gamma} - \m{F}, 
%	\end{aligned}
%\end{equation*}
%where $\m{G}_k$ has been given in \eqref{equ:def_G} and 
%\begin{equation*}
%	\begin{aligned}
%		\m{\Gamma}&\triangleq \m{I}_M + \sum_{k=1}^K \beta_k \gamma_k \m{H}_k, \\
%		\m{F}&\triangleq \Fm[m]{1}{1}. 
%	\end{aligned}
%\end{equation*}
It is well known that the KKT conditions are sufficient and necessary for the global solution of problem \eqref{SDR}. By further exploiting the special strucrure of problem \eqref{SDR} and its KKT conditions, we get the following conditions that the solution of problem \eqref{SDR} must satisfy:  
%One stricter version of KKT conditions of problem (SDR) is as follows: 
\begin{numcases}{}
	\m{D}(\left\{\beta_k\right\},\left\{\m{\Lambda}_m\right\})=\m{0}, \label{equ:dual_const2}\\
	\left.\begin{aligned}
		&\rk(\m{\Lambda}_m)=1,~\m{\Lambda}_m \succeq \m{0},\fm{}, \\ &\m{\Lambda}_m^{(1:m-1,1:m)}= \m{0},~\m{\Lambda}_m^{(m:M,1:m-1)} = \m{0},\fm{},
	\end{aligned}\right\} \label{equ:dual_var2}\\\xiaowuhao{\left.\begin{aligned}&
	\rk(\m{C}_k(\left\{\beta_k\right\},\left\{\m{\Lambda}_m\right\}) - \beta_k \m{H}_k)=M-1, \fm{},\\
	&\m{C}_k(\left\{\beta_k\right\},\left\{\m{\Lambda}_m\right\}) - \beta_k \m{H}_k \succeq \m{0}, \fm{},\end{aligned}\right\}}\label{equ:dual_const1}\\
	\beta_k\geq 0, \fk{}, \label{equ:dual_var1}\\
	\m{V}_k\bullet\left( \m{C}_k(\left\{\beta_k\right\},\left\{\m{\Lambda}_m\right\}) - \beta_k \m{H}_k\right) =0,\fk{},\label{equ:slack_1}\\
	\m{V}_k\succeq \m{0},~\rk(\m{V}_k) = 1, \fk{},\label{equ:pri_var1}\\
	a_k(\left\{\m{V}_k\right\},\m{Q})=0,\fk{},\label{equ:pri_const1}\\
	\m{B}_m(\left\{\m{V}_k\right\},\m{Q})\succeq \m{0}, \fm{},\label{equ:pri_const2}\\
	\m{\Lambda}_m \bullet \m{B}_m(\left\{\m{V}_k\right\},\m{Q}) = 0, \fm{},\label{equ:slack_2}\\
	\m{Q}\succeq \m{0}.\label{equ:pri_var2}
\end{numcases}
The above conditions are essentially the KKT conditions of problem \eqref{SDR} except the one in \eqref{equ:dual_const2}, whose proof needs a judicious treatment of the special structure and the KKT conditions of problem \eqref{SDR}.
%
%which is due to the special structure of problem \eqref{SDR}  

Next, we shall design an algorithm for solving the above equations by further carefully exploiting the special structures in the equtions. 
The idea is to first find $\bk$ and $\Lm$ by solving Eqs.~\eqstd{} and 
then plug $\bk$ and $\Lm$ into Eqs.~\eqstp{} and solve for $\Vk$ and $\m{Q}$. 

\subsubsection{Solving Eqs.~\eqstd{}}
% Recall the definition of $\m{F}$: 
% \begin{equation}
% 	\m{F} =  \Fm[m]{1}{1}.
% 	\label{equ:D}
% \end{equation}
%\noindent\textbf{Solving Eqs.~\eqstd{}.}
 Suppose that $\bk$ are given, we first find $\Lm$ that satisfy Eqs.~\eqref{equ:dual_const2} and \eqref{equ:dual_var2}.
%By the definition of $\m{D}$, $\m{\Gamma}$ and $\m{F}$, 
Define $\m{\Gamma} = \m{I} + \sum_{k=1}^K \beta_k \gamma_k \m{H}_k.$ Then
Eq. \eqref{equ:dual_const2} is equivalent to
\begin{equation}
	\Fm[m]{1}{1} = \m{\Gamma}. 
	\label{equ:D_changed}
\end{equation}
%where $\m{\Gamma}$ is a known matrix. 
We know from the special properties of $\Lm$ in \eqref{equ:dual_var2} that only $\m{\Lambda}_1$ affects the first row and column of matrix $\m{\Gamma}.$  %\eqref{equ:D_changed}. 
Therefore, the entries in the first row of $\m{\Lambda}_1$ should be 
	$\left[ \begin{matrix}
		\frac{1}{\eta_1 - 1}\m{\Gamma}^{(1,1)}, \frac{1}{\eta_1}\m{\Gamma}^{(1,2:M)}
	\end{matrix} \right].$
Since $\m{\Lambda}_1$ is of rank one, we can further obtain all entries of $\m{\Lambda}_1$ based on its entries in the first row. 
After $\m{\Lambda}_1$ is obtained, we can subtract all terms related to $\m{\Lambda}_1$ from both sides of \eqref{equ:D_changed} and the left-hand side of \eqref{equ:D_changed} becomes
\begin{equation}
	\Fm[m]{2}{2}.
\end{equation}
Then we can do the same to find $\m{\Lambda}_2$. We repeat the above procedure until all $\Lm$ are obtained. 
It can be shown that $\Lm$ that satisfy Eqs.~\eqref{equ:dual_const2}--\eqref{equ:dual_var2} are unique, 
and such solution, depending on the given $\bk,$ is denoted as $\Lmbk$. 

To ease the presentation, define $\m{C}_k\triangleq\m{C}_k(\left\{\beta_k\right\},\left\{\m{\Lambda}_m\right\}).$% and $\m{H}_k$.
%, we have $\m{C}_k = \m{G}_k - \beta_k \v{h}_k\v{h}_k^\hermitian. $
~Since $\m{C}_k \succ \m{0}$ and $\m{H}_k\succeq \m{0}$ is of rank one,  
there exists a unique $\beta_k$ such that one and only one eigenvalue of $\m{C}_k-\beta_k\m{H}_k$ is equal to zero.  
Such $\beta_k$ admits the following closed-form solution:
\begin{equation*}
	\beta_k\left(\Lm, \left\{\beta_j\right\}_{j\neq k}\right) = \left( \v{h}_k^\hermitian \m{C}_k^{-1} \m{h}_k\right)^{-1}.
	\label{equ:L2b}
\end{equation*}

From the above discussion, we know: if $\Lm$ are known, one can get $\left\{\beta_k\left(\Lm, \left\{\beta_j\right\}_{j\neq k}\right)\right\}$ 
such that \eqref{equ:dual_const1} and \eqref{equ:dual_var1} hold; 
if $\bk$ are known, one can get $\Lmbk$ such that \eqref{equ:dual_const2} and \eqref{equ:dual_var2} hold. 
If one can find $\bk$ and $\Lmbk$ that satisfy
\begin{equation}
	\beta_k = I_k\left(\bk\right) \triangleq \beta_k\left(\Lmbk, \left\{\beta_j\right\}_{j\neq k}\right),~\fk{}, 
	\label{equ:raw_dual_iter}
\end{equation}
then all Eqs.~\eqstd{} are satisfied. Define
%$$I_k\left(\bk\right) = \beta_k\left(\Lmbk, \left\{\beta_j\right\}_{j\neq k}\right)$$ with 
$\v{\beta} = [\beta_1, \d, \beta_K]^\transpose$ and $I(\v{\beta}) = [I_1(\bk),\d,I_K(\bk)]^\transpose,$ 
then \eqref{equ:raw_dual_iter} becomes 
\begin{equation}%\label{fixed-point-beta}
	\v{\beta} = I(\v{\beta}). 
	\label{equ:dual_fix_point}
\end{equation}  It is worth highlighting that the computational cost of evaluating the function $I(\v{\beta})$ at any given positive $\v{\beta}$ is quite cheap.

\begin{lemma}
	The function $I(\cdot)$ defined in \eqref{equ:dual_fix_point} is a standard interference function. 
	\label{lemma:dual}
\end{lemma}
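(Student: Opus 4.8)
The plan is to verify the three defining properties of a standard interference function (in the sense of Yates): positivity, monotonicity, and scalability. Recall that $I_k(\{\beta_k\}) = \bigl(\v{h}_k^\hermitian \m{C}_k^{-1} \v{h}_k\bigr)^{-1}$, where $\m{C}_k = \G$ and the matrices $\m{\Lambda}_m = \m{\Lambda}_m(\{\beta_k\})$ are the unique solutions of Eqs.~\eqref{equ:dual_const2}--\eqref{equ:dual_var2} determined by $\{\beta_k\}$. The first thing I would do is make the dependence of $\m{C}_k$ on $\v{\beta}$ fully explicit. From the peeling construction that produces $\Lmbk$, each $\m{\Lambda}_m$ is a rank-one PSD matrix whose nonzero block is built from the corresponding block of $\m{\Gamma} = \m{I} + \sum_{k=1}^K \beta_k \gamma_k \m{H}_k$ together with the constant $\eta_m/(\eta_m-1)$; in particular every entry of every $\m{\Lambda}_m$ is an affine (indeed, with nonnegative coefficients in the relevant sense) function of $\v{\beta}$ through $\m{\Gamma}$. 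I would record the key structural fact that $\sum_{m=1}^M \m{E}_m^\hermitian \m{\Lambda}_m \m{E}_m$ — i.e. the contribution of the $\m{\Lambda}_m$'s to $\m{C}_k$ — is a matrix that, as a function of $\v{\beta}$, is monotonically nondecreasing in the PSD order and scales appropriately, because it is assembled entrywise from $\m{\Gamma}$ plus constants.

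Positivity is immediate: since $\m{C}_k \succ \m{0}$ (it contains an $\m{I}$ summand and all other summands are PSD), $\v{h}_k^\hermitian \m{C}_k^{-1} \v{h}_k > 0$ whenever $\v{h}_k \neq \m{0}$, so $I_k(\v{\beta}) > 0$. For monotonicity, suppose $\v{\beta}' \geq \v{\beta}$ componentwise. I would argue that $\m{C}_k(\v{\beta}') \succeq \m{C}_k(\v{\beta})$: the term $\sum_{j\neq k}\beta_j' \gamma_j \m{H}_j \succeq \sum_{j\neq k}\beta_j \gamma_j \m{H}_j$ trivially, and the term $\sum_m \m{E}_m^\hermitian \m{\Lambda}_m(\v{\beta}') \m{E}_m \succeq \sum_m \m{E}_m^\hermitian \m{\Lambda}_m(\v{\beta}) \m{E}_m$ by the structural monotonicity noted above. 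Then $\m{C}_k(\v{\beta}')^{-1} \preceq \m{C}_k(\v{\beta})^{-1}$, hence $\v{h}_k^\hermitian \m{C}_k(\v{\beta}')^{-1}\v{h}_k \leq \v{h}_k^\hermitian \m{C}_k(\v{\beta})^{-1}\v{h}_k$, which gives $I_k(\v{\beta}') \geq I_k(\v{\beta})$. For scalability, fix $\alpha > 1$; I would show $\m{C}_k(\alpha\v{\beta}) \prec \alpha\, \m{C}_k(\v{\beta})$, using that the $\m{I}$ summand satisfies $\m{I} \prec \alpha \m{I}$ strictly while every other summand scales by exactly $\alpha$ (the $\beta_j\gamma_j\m{H}_j$ terms obviously, and the $\m{\Lambda}_m$ terms because replacing $\m{\Gamma}$ by its "$\alpha$-scaled-except-identity" version only decreases them relative to $\alpha\m{\Lambda}_m(\v{\beta})$; here I would need the constants $\eta_m/(\eta_m-1)$ to behave correctly, which they do since $\eta_m>1$). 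Consequently $\m{C}_k(\alpha\v{\beta})^{-1} \succ \alpha^{-1}\m{C}_k(\v{\beta})^{-1}$, so $\v{h}_k^\hermitian\m{C}_k(\alpha\v{\beta})^{-1}\v{h}_k > \alpha^{-1}\v{h}_k^\hermitian\m{C}_k(\v{\beta})^{-1}\v{h}_k$, i.e. $I_k(\alpha\v{\beta}) < \alpha I_k(\v{\beta})$.

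The main obstacle I anticipate is the rigorous treatment of the $\m{\Lambda}_m$ terms in both the monotonicity and scalability steps: unlike the $\beta_j\gamma_j\m{H}_j$ contributions, the $\m{\Lambda}_m$'s are defined only implicitly through the recursive peeling procedure, and one must show that this map $\v{\beta} \mapsto \sum_m \m{E}_m^\hermitian\m{\Lambda}_m(\v{\beta})\m{E}_m$ inherits PSD-monotonicity and strict sub-homogeneity from $\m{\Gamma}$. I would handle this by induction on the peeling index $m$: at stage $m$, the residual matrix on the left-hand side of \eqref{equ:D_changed} is $\m{\Gamma}$ minus the already-extracted lower-index contributions, and one checks that the rank-one block assigned to $\m{\Lambda}_m$ (determined by its first row $[\eta_m/(\eta_m-1),\,\Gamma^{(m,m)}_{\text{res}},\ldots]$ and the rank-one completion) is, entrywise, monotone and strictly sub-homogeneous in $\v{\beta}$ given the inductive hypothesis on stages $1,\ldots,m-1$. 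Once this entrywise claim is in hand, the PSD comparisons follow since $\m{\Lambda}_m \succeq \m{0}$ and the quadratic form $\m{E}_m^\hermitian\m{\Lambda}_m\m{E}_m$ preserves the order. The remaining steps are then routine manipulations with operator monotonicity of $\m{X}\mapsto -\m{X}^{-1}$.
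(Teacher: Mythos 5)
The paper states Lemma~\ref{lemma:dual} without proof, so there is nothing to compare against directly; judged on its own terms, your three-property Yates verification is certainly the intended route, and the outer layers of your argument (positivity from $\m{C}_k \succ \m{0}$, the term $\sum_{j\neq k}\beta_j\gamma_j\m{H}_j$ being PSD-monotone and exactly homogeneous, strictness of scalability coming from the $\m{I}$ summand, and operator antitonicity of $\m{X}\mapsto\m{X}^{-1}$) are all correct. The genuine gap is exactly where you flag it, but your proposed repair does not work: the entries of $\m{\Lambda}_m(\v{\beta})$ are \emph{not} affine in $\v{\beta}$, and ``entrywise monotonicity'' is not a usable invariant. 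Working out the peeling recursion for $M=2$ gives $(\eta_2-1)\m{\Lambda}_2^{(2,2)} = \m{\Gamma}^{(2,2)} - \tfrac{\eta_1-1}{\eta_1}\,|\m{\Gamma}^{(1,2)}|^2/\m{\Gamma}^{(1,1)}$, a rational function of $\v{\beta}$ obtained by \emph{subtracting} a rank-one completion; moreover the off-diagonal entries $\m{\Gamma}^{(i,j)} = \sum_k\beta_k\gamma_k h_{k,i}\bar{h}_{k,j}$ are complex, so they are not monotone in any entrywise sense and $|\m{\Gamma}^{(i,j)}|$ can decrease as $\v{\beta}$ increases. An induction on the peeling index that tracks entrywise monotonicity therefore cannot close.

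What does close the argument is to recognize that only the diagonal values $\m{\Lambda}_m^{(m,m)}$ enter $\m{C}_k$ (since $\m{E}_m^\hermitian\m{\Lambda}_m\m{E}_m$ retains only the $(m,m)$ entry), and that the recursion expresses $(\eta_m-1)\m{\Lambda}_m^{(m,m)}$ as a convex combination of nested Schur-complement functionals of $\m{\Gamma}$ weighted by the constants $\tfrac{\eta_j-1}{\eta_j}\in(0,1)$; e.g.\ in the $M=2$ case it is $\tfrac{1}{\eta_1}\m{\Gamma}^{(2,2)} + \tfrac{\eta_1-1}{\eta_1}\bigl(\m{\Gamma}^{(2,2)} - \m{\Gamma}^{(2,1)}(\m{\Gamma}^{(1,1)})^{-1}\m{\Gamma}^{(1,2)}\bigr)$. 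Since the Schur complement is matrix-monotone, concave, and positively homogeneous of degree one in the PSD order, and since $\m{\Gamma}(\v{\beta}')\succeq\m{\Gamma}(\v{\beta})$ for $\v{\beta}'\geq\v{\beta}$ while $\m{\Gamma}(\alpha\v{\beta})\preceq\alpha\m{\Gamma}(\v{\beta})$ for $\alpha>1$ (the identity does not scale), each $\m{\Lambda}_m^{(m,m)}$ inherits monotonicity and sub-homogeneity in $\v{\beta}$, which is all that the rest of your argument needs. So the architecture of your proof is sound, but the inductive step must be carried out in the PSD order via Schur complements rather than entrywise; as written, that step is a genuine gap.
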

From Lemma \ref{lemma:dual} and \cite[Theorem 2]{Yates95}, it follows that the following fixed-point iteration
$\v{\beta}^{(i+1)} = I(\v{\beta}^{(i)})$
%	\label{equ:dual_iter}
%\end{equation}
will converge to the unique solution of \eqref{equ:dual_fix_point}.
Therefore, the above fixed-point iteration provides an efficient way of solving Eqs.~\eqstd{}. 

\subsubsection{Solving Eqs.~\eqstp{}}
%\textbf{Solving Eqs.~\eqstp{}.} 
Suppose we already have $\bk$ and $\Lm$ that satisfy Eqs.~\eqstd{}. 
We still need to find $\Vk$ and $\m{Q}$ that satisfy Eqs. \eqstp{}. By Eq. \eqref{equ:pri_var1}, let $\m{V}_k = p_k \v{v}_k \v{v}_k^\hermitian$ with $\|\v{v}_k\|=1$. 
Then Eq. \eqref{equ:slack_1} becomes
$\v{v}_k^\hermitian \left(\m{C}_k- \beta_k \m{H}_k\right) \v{v}_k = 0.$ %which is equivalent to $\left(\m{C}_k- \beta_k \m{H}_k\right) \v{v}_k = \m{0}.$
Combining this and Eq. \eqref{equ:dual_const1} gives
$\left(\m{C}_k- \beta_k \m{H}_k\right) \v{v}_k=\m{C}_k \v{v}_k - \beta_k \m{h}_k \left(\m{h}_k^\hermitian \v{v}_k\right) = \m{0}.$
Hence, $\v{v}_k$ can be solved explicitly as follows: 
\begin{equation}
	\v{v}_k = \frac{\m{C}_k^{-1} \v{h}_k}{\left\|\m{C}_k^{-1} \v{h}_k\right\|}.
\label{equ:pri_beamforming_direction}
\end{equation}
%If \eqref{equ:pri_beamforming_direction} is directly substituted into (SDR), we will get an optimization problem over $\pk$ and $\m{Q}$ with $K$ scalar constraints and $K+M+1$ semidefinite conic constraints, which is hard to solve. 
%One of the main contribution of our work is to utilize Eqs.~\eqref{equ:pri_const1}--\eqref{equ:pri_var2} to reduce the computational cost of solving (SDR) (after solving Eqs.~\eqstd{}) to a $K$-dimensional fixed-point iteration, as shown in the following. 
Let $\m{U}_k = \v{v}_k \v{v}_k^\hermitian$. Substituting \eqref{equ:pri_beamforming_direction} into \eqref{equ:pri_const1}, one has  
\begin{equation*}
	p_k \m{U}_k \bullet \m{H}_k - \gamma_k \left(\sum_{j\neq k} p_j \m{U}_j \bullet \m{H}_k + \m{Q}\bullet \m{H}_k + \sigma^2\right) = 0.
\end{equation*}
Then one can solve for $p_k$ as follows:
\begin{equation}
	p_k\left(\m{Q}, \left\{p_j\right\}_{j\neq k}\right) = \frac{\gamma_k \left(\sum_{j\neq k} p_j \m{U}_j \bullet \m{H}_k + \m{Q}\bullet \m{H}_k + \sigma^2\right)}{\m{U}_k \bullet \m{H}_k}. 
\label{equ:Q2p}
\end{equation}

Now suppose $\pk$ are known. By Eq. \eqref{equ:dual_var2}, one can decompose $\m{\Lambda}_m$ into
$\m{\Lambda}_m = \v{\lambda}_m \v{\lambda}_m^\hermitian,$
where $\v{\lambda}_m = \left[0, \d, 0, \v{\lambda}_m^{(m)}, \d, \v{\lambda}_m^{(M)}\right]^\transpose$. 
This decomposition, together with Eqs. \eqref{equ:pri_const2} and \eqref{equ:slack_2}, implies
\begin{equation}
	\m{B}_m \v{\lambda}_m = \m{0}, \fm{}.
	\label{equ:equ_Bm}
\end{equation}
Next we solve \eqref{equ:equ_Bm} from $m=M$ to $m=1$ and can obtain the desired $\m{Q}$. 
More specifically, when $m = M$, since $\v{\lambda}_M^{(M)} > 0$, it follows that
\begin{equation}
	\m{Q}^{(M,M)} = \frac{\sum_{k=1}^K \m{V}_k^{(M,M)}}{\et - 1}. 
	\label{equ:QMM}
\end{equation}
When $m = M-1$, we can substitute \eqref{equ:QMM} into \eqref{equ:equ_Bm} to solve for 
$\m{Q}^{(M-1,M-1)}, \m{Q}^{(M,M-1)},$ and $\m{Q}^{(M-1,M)}$. 
In particular, we can obtain $\m{Q}^{(M-1,M)}$ by using the last equation of \eqref{equ:equ_Bm} with $m=M-1;$ then we can further obtain $\m{Q}^{(M-1,M-1)}$ by using the second last equation of \eqref{equ:equ_Bm} with $m=M-1$. 
In fact, each step of the above procedure admits a closed-form solution. 
We can do the same sequentially to solve problem \eqref{equ:equ_Bm} with $m=M-2, M-3, \d,1$. Denote the solution to \eqref{equ:equ_Bm} as $\m{Q}(\pk)$.

Similarly, we can define a fixed-point iteration to solve Eqs.~\eqstp{} for the desired $\Vk$ and $\m{Q}$ and show that the fixed-point iteration is a standard interference function and thus converges to the unique solution. We omit the details due to the space reason.

\subsubsection{Proposed Fixed-Point Iteration Algorithm}
Now, we present the algorithm for solving problem \eqref{SDR} (equivalent to problem (P)). 
The algorithm first finds $\bk$ and $\Lm$ that satisfy Eqs.~\eqstd{}; 
with found $\bk$ and $\Lm$ fixed, the algorithm then finds $\Vk$ and $\m{Q}$ that satisfy Eqs.~\eqstp{}. 
Hence, $\Vk$, $\m{Q}$, $\bk,$ and $\Lm$ together satisfy Eqs.~\eqsta{} and thus is a KKT point of problem \eqref{SDR}. 
Since $\rk\left(\m{V}_k\right) = 1$ for all $k$, we can recover the optimal solution for problem (P). 
The pseudocodes of the proposed algorithm are given in \algref{alg:the_alg}. 

\begin{algorithm}[H]
	\caption{Proposed Algorithm for Solving Problem (P)}
	\begin{algorithmic}[1]
		\STATE Find $\bk$ and $\Lm$ that satisfy Eqs. \eqstd{} by performing the fixed-point iteration in \eqref{equ:dual_fix_point} on $\bk$ until the desired error bound is met. 
		\STATE Find $\Vk$ and $\m{Q}$ that satisfy Eqs. \eqstp{} by performing an appropriate fixed-point iteration on $\pk$ until the desired error bound is met. 
		\STATE Find $\v{v}_k$ such that $\m{V}_k = \v{v}_k \v{v}_k^\hermitian, \fk{}$. 
		\STATE \textbf{Output:} $\{\v{v}_k\}$~and~$\m{Q}.$
	\end{algorithmic}
	\label{alg:the_alg}
\end{algorithm}

\begin{theorem}
	If the SDR in \eqref{SDR} is feasible, then proposed \algref{alg:the_alg} returns the optimal solution of problem (P). 
\end{theorem}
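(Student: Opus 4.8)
The plan is to show that, on termination, \algref{alg:the_alg} outputs data that assemble into a tuple $\big(\Vk,\m{Q},\bk,\Lm\big)$ satisfying the \emph{entire} system \eqsta{}; since (as asserted right below that numbered display) \eqsta{} is equivalent to the KKT conditions of the \emph{convex} problem \eqref{SDR}, such a tuple is a globally optimal primal--dual pair for \eqref{SDR}. The remaining step is then to use the rank-one property of the primal part to read off an optimal solution of problem (P).

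First I would argue that both fixed-point iterations in the algorithm are well defined and convergent, and that their limits solve the relevant subsystems. As already observed in the proof of Theorem~1, feasibility of \eqref{SDR} guarantees the existence of a primal--dual pair satisfying its KKT conditions, so the system \eqsta{} is solvable. By Lemma~\ref{lemma:dual} and \cite[Theorem 2]{Yates95}, Step~1 converges to the unique fixed point $\v{\beta}^{\star}$ of \eqref{equ:dual_fix_point}; taking $\Lm[\star]=\Lmbk[\star]$ via the sequential construction preceding Lemma~\ref{lemma:dual}, the pair $\bk[\star]$, $\Lm[\star]$ satisfies Eqs.~\eqstd{} by the derivation that led to \eqref{equ:dual_fix_point}. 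For Step~2, I would establish the analogue of Lemma~\ref{lemma:dual}: the map $\pk\mapsto\big\{p_k\big(\m{Q}(\pk),\{p_j\}_{j\neq k}\big)\big\}$ built from \eqref{equ:Q2p} and the back-substitution that solves \eqref{equ:equ_Bm} is a standard interference function, so by \cite[Theorem 2]{Yates95} the iteration converges to a unique $\pk[\star]$. With $\v{v}_k$ from \eqref{equ:pri_beamforming_direction}, $\m{V}_k=p_k^{\star}\v{v}_k\v{v}_k^{\hermitian}$, and $\m{Q}=\m{Q}(\pk[\star])$, Eqs.~\eqstp{} hold by construction. Hence the assembled tuple satisfies \eqsta{}, is a KKT point of \eqref{SDR}, and, by convexity of \eqref{SDR}, $\big(\Vk,\m{Q}\big)$ is globally optimal for \eqref{SDR}.

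It remains to recover an optimal solution of (P). By \eqref{equ:pri_var1} each returned $\m{V}_k$ is rank-one and positive semidefinite, so Step~3 factors it as $\m{V}_k=\v{v}_k\v{v}_k^{\hermitian}$. I would verify that $\big(\{\v{v}_k\},\m{Q}\big)$ is feasible for (P): the SINR constraints hold with equality by \eqref{equ:pri_const1}, the fronthaul constraints hold by \eqref{equ:pri_const2}, and $\m{Q}\succeq\m{0}$ by \eqref{equ:pri_var2}. Moreover its objective in (P), namely $\sum_k\|\v{v}_k\|^2+\m{Q}\bullet\m{I}=\sum_k\m{V}_k\bullet\m{I}+\m{Q}\bullet\m{I}$, equals the objective of \eqref{SDR} at $\big(\Vk,\m{Q}\big)$, i.e., the optimal value of \eqref{SDR}. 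Since \eqref{SDR} is a relaxation of (P), this value lower-bounds the optimal value of (P); hence the recovered feasible point attains the bound and is optimal for (P). (One could alternatively just invoke the tightness of Theorem~1 to pass between the two problems.)

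The main obstacle lies in the two facts flagged as ``omitted'' or needing ``judicious treatment'' above. First, establishing that \eqsta{} really is equivalent to the KKT system of \eqref{SDR} --- in particular that stationarity forces the \emph{equality} \eqref{equ:dual_const2}, $\m{D}(\bk,\Lm)=\m{0}$, rather than merely $\m{D}(\bk,\Lm)\succeq\m{0}$ --- requires carefully exploiting the block structure of \eqref{SDR} and of its multipliers $\bk$, $\Lm$. Second, proving that the $\pk$-iteration is a standard interference function is delicate because $\m{Q}(\pk)$ is defined only \emph{implicitly}, through the cascade of closed-form updates that solve \eqref{equ:equ_Bm} from $m=M$ down to $m=1$; monotonicity and scalability in $\pk$ must be propagated through that cascade, and one must also check that the cascade is always well posed (it uses $\eta_m>1$) and returns a valid $\m{Q}$ with $\m{Q}\succeq\m{0}$ and $\m{B}_m(\{\m{V}_k\},\m{Q})\succeq\m{0}$.
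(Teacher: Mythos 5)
Your proposal is correct and follows essentially the same route as the paper, which argues that the algorithm's output satisfies the full system \eqsta{}, hence is a KKT point of the convex problem \eqref{SDR}, and then uses the rank-one property of each $\m{V}_k$ to recover the optimal solution of (P). You also correctly identify the two details the paper leaves unproved (the equivalence of \eqref{equ:dual_const2} with the KKT stationarity condition, and the standard-interference-function property of the $\{p_k\}$ iteration), which the paper itself defers or omits.
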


In addition to the global optimality, proposed Algorithm 1 is also computationally efficient, as the evaluation of the functions in fixed-point iterations is computationally cheap. 
%We omit the proof due to the limitation of the space.
%We have the following theoretical guarantee on the proposed algorithm. 

% \begin{proof}
% 	Lemma \ref{lemma:dual}, \ref{lemma:primal} guarantees the algorithm converge to a solution of \eqsta{}, which is also a solution to the KKT conditions. 
% 	Because (SDR) is convex and strictly feasible, the KKT conditions are the sufficient condition for global optimality. 
% 	Consequently, \algref{alg:the_alg} converge to the global optimal solution of (P). 
% \end{proof}

\section{Simulation Results}

% \begin{figure}[t]
% 	\centering
% 	\includegraphics[width=0.4\textwidth,trim={0.6cm 5cm 1.8cm 6cm},clip]{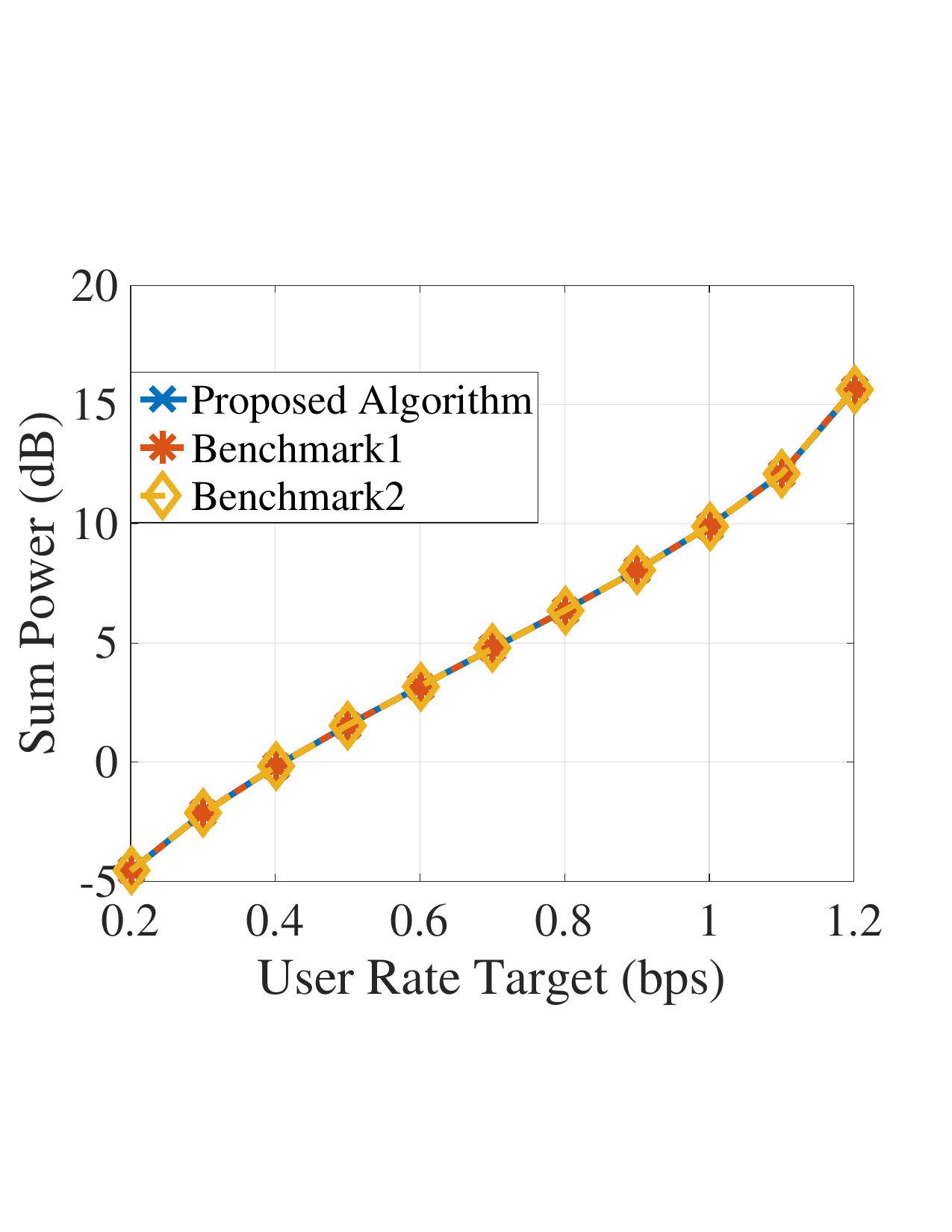}
% 	\caption{Average sum power optimal value versus the user rate target.}
% 	\label{fig:opt_value_comparison}
% \end{figure}

% \begin{figure}[t]
% 	\centering
% 	\includegraphics[width=0.4\textwidth,trim={0.6cm 5cm 1.8cm 6cm},clip]{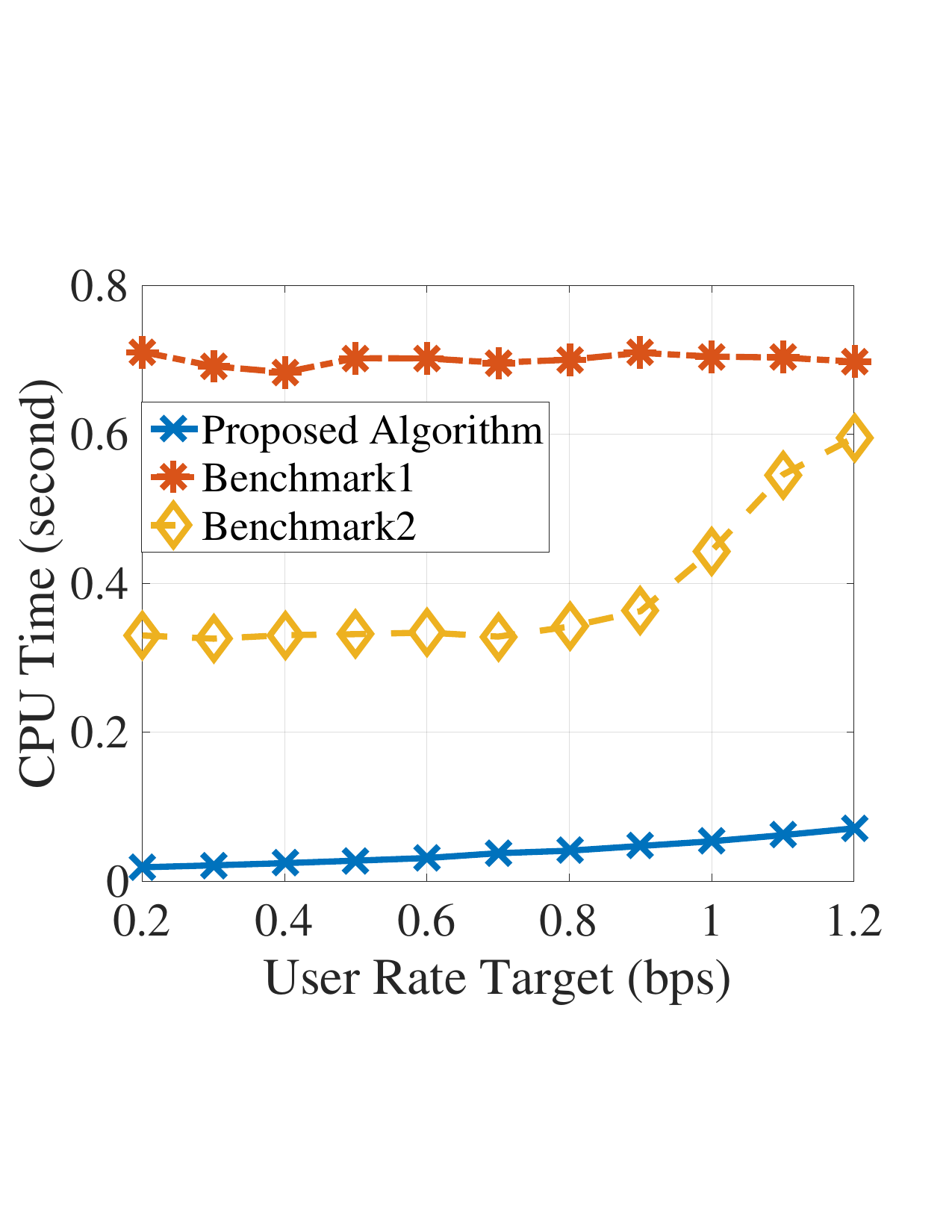}
% 	\caption{Average CPU time versus the user rate target.}
% 	\label{fig:time_comparison}
% \end{figure}

\begin{figure}[t]


\captionsetup{skip=3pt}
\begin{subfigure}{0.24\textwidth}
    \includegraphics[width=\textwidth,trim={0.4cm 5cm 1.5cm 6cm},clip]{optK10M8T200.pdf}
	\vspace{-15pt}
	\caption{}
    \label{fig:opt_value_comparison}
\end{subfigure}
\hfill
\begin{subfigure}{0.24\textwidth}
    \includegraphics[width=\textwidth,trim={0.4cm 5cm 1.5cm 6cm},clip]{timeK10M8T200.pdf}
	\vspace{-15pt}
	\caption{}
    \label{fig:time_comparison}
\end{subfigure}   
\caption{(a) Average sum power versus the user rate target; (b) Average CPU time versus the user rate target.}
\label{fig:figures}
\vspace{-10pt}
\end{figure}

In this section, we present some simulation results to show the correctness and the efficiency of proposed \algref{alg:the_alg} for solving problem (P). We consider a network with $M = 8$ relays and $K = 10$ users, 
where the wireless channels between these relays and users are generated based on the i.i.d. Rayleigh fading model following $\cCN(0, 1),$ 
and the fronthaul capacities between all relays and the CP are set to be 3 bits per symbol (bps). 
Moreover, the noise powers at the users are set to be $\sigma^2 = 1$. 
The rate targets for all the users are assumed to be identical. 
All simulation results are obtained by averaging over 200 Monte-Carlo runs. 

We compare \ouralg with the following two benchmarks. (i)  Benchmark1 is to directly call CVX \cite{CVX} to solve the SDR in \eqref{SDR}. 
This benchmark is helpful in verifying the tightness of the corresponding SDR as well as the correctness of Theorem 2. (ii) Benchmark2 is the proposed algorithm in \cite{liu_uplink-downlink_2020}. 
This algorithm first uses the fixed-point iteration to solve the dual uplink problem and 
then calls CVX to solve the reduced primal downlink problem with fixed beamforming vectors (which is a convex problem). 
The key difference of \ouralg and the algorithm in \cite{liu_uplink-downlink_2020} is that 
our proposed algorithm uses the fixed-point iteration algorithm 
for solving the primal problem by exploiting more special 
structures in the problem (after solving the dual problem).
%\begin{description}
%	\item[1.] Benchmark1 is to directly call CVX \cite{CVX} to solve the SDR in \eqref{SDR}. 
%	This benchmark is helpful in verifying the tightness of the corresponding SDR as well as the correctness of Theorem 2. 
%	\item[2.] Benchmark2 is the proposed algorithm in \cite{liu_uplink-downlink_2020}. 
%	This algorithm first uses one fixed-point iteration to solve the dual uplink problem and 
%	then calls CVX to solve the reduced primal downlink problem with fixed beamforming vectors (which is a convex problem). 
%	The key difference of \ouralg and the algorithm in \cite{liu_uplink-downlink_2020} is that 
%	our proposed algorithm uses the fixed-point iteration algorithm 
%	for solving the primal problem by exploiting more special 
%	structures (after solving the dual problem). 
%\end{description}

\captionsetup{subrefformat=parens}
Fig.~\ref{fig:figures}~\subref{fig:opt_value_comparison} plots the average sum power obtained by the proposed algorithm and two benchmarks, where the user rate target ranges from 0.2 to 1.2 bps. 
We can see from the figure that all the three algorithms return the same solution. 
This verifies the tightness of the SDR (i.e., Theorem 1) and the global optimality of the solution returned by the proposed algorithm (i.e., Theorem 2). 
Fig.~\ref{fig:figures}~\subref{fig:time_comparison} plots the average CPU time taken by different algorithms. 
We can observe from the figure that: 
Benchmark2 performs much efficient than Benchmark1 
and our proposed algorithm performs the most efficient. 
In particular, our proposed algorithm significantly outperforms Benchmark2 
in terms of the CPU time and the CPU time gap tends to become large as the user rate target becomes large 
(i.e., as the corresponding problem becomes difficult). 
These preliminary results illustrate high efficiency and global optimality of our proposed algorithm 
as well as the important role of Lagrangian duality in exploiting the special problem structure in the algorithmic design. 

\newpage
\bibliographystyle{IEEEtran}

%\bibliography{CIC}
\bibliography{icassp_duality}

\newpage

\end{document}